\documentclass[sigconf, 10pt, nonacm]{acmart}
\usepackage{amsmath}
\usepackage{amsthm}
\usepackage{booktabs}
\usepackage{cleveref}
\usepackage{enumerate}
\usepackage{stfloats}

\usepackage{todonotes}

\usepackage{tikz}
\usetikzlibrary{calc,fit,shapes,shapes.geometric,arrows.meta,positioning}
\usepackage{pgfplots}
\usepackage{pgfplotstable}
\usepgfplotslibrary{statistics}
\pgfplotsset{compat=1.18}

\newtheorem{theorem}{Theorem}
\newtheorem{lemma}{Lemma}

\newtheorem{definition}{Definition}

\newtheorem{corollary}{Corollary}

\makeatletter
\let\c@table\c@figure
\makeatother

\title[Byzantine Fault-Tolerant Post-Quantum Distributed Quorum Signatures]{Byzantine Fault-Tolerant Post-Quantum\\Distributed Quorum Signatures}

\author{Quentin Kniep}
\affiliation{%
  \institution{Anza}
  \country{Switzerland}
}

\author{Jakub Sliwinski}
\affiliation{%
  \institution{Anza}
  \country{Switzerland}
}

\author{Roger Wattenhofer}
\affiliation{%
  \institution{Anza \& ETH Zurich}
  \country{Switzerland}
}

\begin{abstract}
Threshold, aggregate, and multi-signatures---which we collectively call
quorum signatures---certify that a quorum of
nodes endorsed a statement, with a certificate as small as a single
signature. No constant-size post-quantum quorum signature is
known: all candidates grow with the number of signers and are slow to
aggregate, making quorum signatures the hardest obstacle to migrating
byzantine fault-tolerant systems to post-quantum security.

In this paper, we sidestep this open cryptographic problem by changing how the protocol
communicates. We introduce \emph{Distributed Quorum Signature (DQS)}, 
a primitive built solely from ordinary digital signatures and
a Bracha-style approval broadcast. DQS turns certificates from network
messages into local events. Two event types divide the roles certificates
play: \emph{weak certificates} capture safety, \emph{strong certificates} capture
liveness.
In DQS every message is constant size, fitting a single datagram
regardless of the number of nodes. The total communication is
quadratic, and no security assumptions change.
In a large distributed system, the overhead of post-quantum DQS is competitive with the canonical pre-quantum BLS scheme.

\end{abstract}

\begin{document}

\maketitle

\section{Introduction}

\begin{table*}[b]
\centering
\begin{tabular}{@{}llllll@{}}
    \toprule
    Scheme & Messages & Quorums & Interaction & Attribution & Transferability \\
    \midrule
    Threshold sig. & same & thresholds only$^\dagger$ & setup required & not generally & succinct \\
    Aggregate sig. & arbitrary & arbitrary & none & direct from sig. & succinct sig. + $\mathcal{O}(n)$ bitmap \\
    DQS & arbitrary & arbitrary & all-to-all$^\ddagger$ & some correct node$^*$ & $\mathcal{O}(n)$ size \\
    \bottomrule
\end{tabular}
\caption{\label{tab:comparison}Comparison of approaches to quorum certification.
$^\dagger$Arbitrary weights must be virtualized.
$^\ddagger$Direct or indirect.
$^*$While not all nodes creating the quorum certificate can attribute the signers, we ensure that there is at least one correct node that holds the cryptographic proof to do so.}
\end{table*}


A powerful quantum computer would pose a grave threat to internet security.
Many established cryptographic primitives, such as digital signatures, could be attacked and broken by a quantum computer.
Quantum computing may not be practical yet, but protocols must be ready before it is too late.

Fortunately, many cryptographic primitives already have post-quantum alternatives. These alternatives, however, often come with substantial performance trade-offs:
Naively swapping a pre-quantum (quantum insecure) primitive for its post-quantum (quantum secure) counterpart can impose a severe cost.

The situation is particularly dire for the widely used advanced signature primitives, like aggregate signatures, threshold signatures, and \textit{multiple} multi-signature variants.
Each of these allow producing combined cryptographic signatures for a quorum of signers, yet remain as short as a signature from a single party.
We use \emph{quorum signature} as an umbrella term for such primitives.
See \Cref{tab:comparison} for a comparison. 

BLS aggregate signatures~\cite{agg_bls,BLS04} allow ad hoc aggregation of signatures over the same message by simply adding up individual signatures, producing an aggregate of just 48 to 192 bytes, depending on the parameterization.
This compactness has made BLS aggregate signatures a popular ingredient in many pre-quantum distributed systems.
As a result, when trying to migrate a distributed system to post-quantum security, quorum signatures are often the hardest piece to replace.

While post-quantum quorum signatures exist, there is no post-quantum alternative with constant size aggregates.
The existing proposals all grow with the number of signers and involve slow aggregation.
In the most compact proposals even verification is unacceptably slow.
Because of this, quorum signatures are a true show-stopper when it comes to making distributed protocols secure against potential quantum attacks.

To the best of our knowledge, no roadmap exists for achieving efficient
post-quantum quorum signatures of any kind.
In this paper, we present a radically different approach: rather than tackling this open cryptographic problem head-on, we sidestep aggregation entirely.

Our approach changes the semantics.
In established (non-distributed) quorum signatures, one party collects a quorum of individual signatures and locally combines them into a quorum signature that serves as a certificate that the quorum was achieved.
We instead introduce a \textit{Distributed Quorum Signature (DQS)}:
Rather than being computed locally, the DQS is the result of a distributed communication protocol.
Instead of a single concrete certificate, the protocol produces two distinct types of certificates, each carrying its own novel semantics, both useful in distributed protocol design.

We define DQS in \Cref{def:interactive-certificate-creation} with the notions of weak and strong certificates.
Certificates correspond to quorum signatures, but are local events resulting from messaging in the distributed system rather than self-contained proofs like quorum signatures.

For a weak certificate to be created by any correct node, some correct node must have observed the necessary quorum of votes.
In other words, it is as \textit{safe} for a protocol to act based on a weak certificate, as it is based on a quorum signature.
However, if some of these votes were cast by byzantine nodes, it might be that other correct nodes will not create the same weak certificate.

A strong certificate features an additional property: If it is observed by any correct node, all other correct nodes will also observe the certificate.
Moreover, if enough correct nodes cast votes to meet the quorum, the corresponding strong certificate will necessarily be created.
In other words, the malicious nodes are never needed for the protocol to progress with strong certificates.
Since correct nodes are enough to produce strong certificates, and strong certificates are observed by all correct nodes, \textit{liveness} can rely on strong certificates.

In executions of the protocol without misbehaving nodes, both weak and strong certificates will be observed by all nodes whenever the needed quorum is met.
With misbehavior, some nodes might create weak certificates without creating the same strong certificate.
However, we ensure that if a strong certificate is observed, all correct nodes also observe the weak certificate.

DQS is instantiated with a regular post-quantum signature scheme, so can be used without requiring subtle cryptographic primitives.
DQS adds only a small constant overhead to each message, such that for small enough signatures and small enough vote payloads all messages can fit in an MTU-sized datagram, regardless of the size of the distributed system.
In summary, post-quantum security does not have to mean a slower, clunkier protocol.

\section{Model}
\label{sec:model}

\paragraph{Node.}
We have a distributed system with $n$ individual computers, which we call nodes 
$v_1,v_2,$ $\ldots,$ $v_n$.
We assume that the set of nodes is fixed and publicly known, i.e., each node knows how to contact (IP address and port number) every other node.
Each node has a public key, and all nodes know all public keys.

\paragraph{Weight.}
Nodes have different voting weights. 
Each node $v_i$ has a known weight $\rho_i > 0$
to denote node $v_i$'s fraction of the entire weight, i.e., $\sum_{i=1}^n \rho_i = 1$.
In the symmetric case, every node has the same weight, i.e., $\rho_i = 1/n$.

\paragraph{Message.}
Nodes communicate by exchanging authenticated messages over the internet.
Our protocol never uses large messages.
Specifically, every message 
fits into a single MTU datagram~\cite{ethernet_rfc} (depending on the signature scheme and assuming the vote payloads are small enough, roughly $< 400$~B).
Because of this, we can use UDP with authentication, e.g., QUIC-UDP.

\paragraph{Broadcast.}
Sometimes, a node needs to broadcast the same message to \textit{all} ($n-1$ other) nodes.
The sender node simply loops over all other nodes and sends the message to one node after the other. If available, we could also use a multicast service.

\paragraph{Adversary.}
Some nodes can be byzantine in the sense that they can misbehave in arbitrary ways.
Byzantine nodes can for instance forget to send a message.
They can also collude to attack the system in a coordinated way.
We assume that all the byzantine nodes together own up to $f$ of the total weight.
Additionally, nodes with weight up to $c$ may crash at any time.
The remaining nodes with weight at least $1-f-c$ are \textit{correct} and follow the protocol.

\paragraph{Fault Tolerance.}
We assume that $3f + 2c < 1$ for our construction to be correct. Popular parameter choices for this assumption are $f<1/5,c=1/5$ or $f<1/3, c=0$.

\paragraph{Asynchrony.}
We consider the partially synchronous network setting of Global Stabilization Time (GST) \cite{PartialSynchrony,unifying_partial}.
Messages sent between correct nodes will eventually arrive, but they may take arbitrarily long to arrive.

\paragraph{Synchrony.}
In the model of GST, synchrony simply corresponds to a global worst-case bound $\Delta$ on message delivery.
The GST model captures periods of synchrony and asynchrony by stating that before the unknown and arbitrary time $GST$ messages can be arbitrarily delayed, but after time GST all previous and future messages $m$ sent at time $t$ will arrive at the recipient at latest at time 
$\max(\textsf{GST}, t)+\Delta$. 




\paragraph{Votes and Quorum Signatures}
Our construction provides an alternative for protocols making use of votes and quorum signatures defined below.
In this general setup, nodes vote on a specific event. Nodes receive and remember the votes of other nodes. If some node sees enough votes for an event, they can generate a quorum signature.

\begin{definition}[vote]\label{def:vote}
Given a set of messages.
A vote is a message signed by a single node.
\end{definition}

\begin{definition}[quorum signature]\label{def:certificate}
Consider a predicate $P$ over sets of votes, such that if $P(V)$ is true for some $V$, then for any $V' \supseteq V$, $P(V')$ is also true.
Then, a quorum signature is a set of signatures (individual, aggregate, threshold or combinatorial) proving that $P(V)$ is true and votes $V$ were cast by nodes.
\end{definition}

\begin{table*}[b]
\begin{tabular}{@{}lcccccc@{}}

    \toprule
    Scheme & Type & Sig. & PK & Cat. & Sign.\ Risk & Verify \\
    \midrule
    \multicolumn{6}{@{}l}{\emph{Standardized or in active standardization}}\\
    Falcon-512~\cite{falcon}            & Lattice & 666\,B      & 897\,B  & 1 & High     & Fast     \\
    XMSS-SHA2\_20\_192~\cite{xmss_nist} & Hash    & 1.7~KB & 52~B   & 3 & Moderate & Fast \\
    ML-DSA-44~\cite{nist_ml_dsa}        & Lattice & 2.4~KB & 1.3~KB & 2 & Low      & Fast \\
    SLH-DSA-128~\cite{nist_slh_dsa}     & Hash    & 7.9~KB & 32~B   & 1 & Low      & Slow \\
    \midrule
    \multicolumn{6}{@{}l}{\emph{Non-standardized (research)}}\\
    SQISign~\cite{sqisign}              & Isogeny & 148~B  & 65~B   & 1 & High     & Slow \\
    Hawk-512~\cite{hawk}                & Lattice & 555~B  & 1.0~KB & 1 & Low      & Fast \\
    \bottomrule
\end{tabular}
\caption{\label{tab:pq-signatures}Candidates for post-quantum signature schemes with small signatures.
Sig./PK: signature and public-key sizes.
Cat.: NIST security category (1, 2, 3, 4, 5; higher is stronger).
Sign.\ Risk: risk of a signing-side failure (secret-state
reuse for XMSS, side-channel leakage for
Falcon/SQISign).
Verify: relative verification speed.
}
\end{table*}

\section{Related Work}
\label{sec:primitives}




In this section, we discuss several quantum-proof cryptographic tools.
Notably, many fundamental cryptographic primitives remain effective even in the presence of practical quantum computers.
In particular, symmetric encryption, message authentication mechanisms, and cryptographic hash functions are considered to remain secure against quantum attacks.

While quantum algorithms such as Grover's algorithm can provide a quadratic speedup for brute-force search, they do not fundamentally break these primitives, and the resulting loss in security can generally be compensated by increasing key and digest sizes \cite{grover1996fast,bernstein2009pqc,grassl2016applying}.
Moreover, many hash-based constructions and security proofs have been studied explicitly in the quantum setting \cite{boneh2011random}.

Unfortunately, the popular elliptic curve signature schemes do not carry over to the world of quantum computing.
We want to specifically consider the use case where each message of a protocol should fit into a single MTU-sized network datagram.
Assuming QUIC is used in datagram mode for transport, 1{,}160\,B remain available for payload.

Table \ref{tab:pq-signatures} summarizes the state of the art of post-quantum signature schemes. 
The smallest to be standardized post-quantum signature is Falcon-512~\cite{falcon}.
All other standardized signatures do not fit a datagram MTU size.
However, there are some interesting alternatives that are not standardized (yet):
Hawk-512 \cite{hawk} is closely related to Falcon-512; it eliminates the floating-point sampling hazard in the signing procedure, which can leak the secret key if implemented without constant-time guarantees.


While XMSS-SHA2\_20\_192~\cite{xmss_nist} signatures are too large, XMSS-like~\cite{xmss} hash-based constructions can be parametrized.
In \Cref{sec:custom_xmss} we discuss several parameter options and sketch how even using one-time signatures directly can be feasible.
In summary, while signatures incur an overhead that demands a redesign of a protocol, there are feasible solutions.

\begin{table*}[t]
\begin{center}
\begin{tabular}{@{}lccccccc@{}}
    \toprule
    Scheme & Type & Messages & Single Sig. & Agg.\ Sig. & Growth & Aggregate & Verify \\
    \midrule
    naive Falcon-512 & Lattice & arbitrary & 666\,B & $\le$ 666\,KB & $\Theta(n)$ & $\approx$ 0 & $\le$ 40\,ms \\
    \midrule
    Lemur~\cite{lemur} & Lattice & same, sync. & $\approx$ 78\,KB & $\approx$ 185\,KB & $\mathcal{O}(\log n)$ & $\approx$ 1\,s & $\approx$ 15\,ms \\ 
    leanMultisig~\cite{lean_multisig,leanvm} & Hash & same & $\approx$ 1.5\,KB & $\approx$ 400\,KB & polylog & $\approx$ 2--3\,s & $\approx$ 30\,ms \\
    Falcon512+LaZer~\cite{lazer} & Lattice & arbitrary & 666\,B & $\approx$ 70\,KB & polylog & $\approx$ 500\,ms & $\approx$ 250\,ms \\ 
    \midrule
    BLS~\cite{agg_bls} & Pairing & arbitrary & 48--192\,B & 48--192\,B & $\Theta(1)$ & $\approx$ 1\,ms & $\approx$ 0.3--0.5\,ms \\
    \bottomrule
\end{tabular}
\caption{\label{tab:pq-agg-signatures}Post-quantum aggregatable signature candidates at $n \approx 1{,}000$ signers; pre-quantum BLS for reference.
Synchronized schemes (sync.)\ sign the same message, and only if signers signed them for the same time step.
Timings as reported by the respective works, on differing hardware.
All schemes additionally need ${\approx}\,n$ bits to identify the signer set.
Naive Falcon-512 concatenates: aggregation is free, but size and verification grow with $n$.}
\end{center}
\end{table*}

What's even more difficult to replace are BLS aggregate signatures, which are commonly used in protocols to allow aggregation of individual nodes' votes into certificates for a quorum of signers.
While BLS signatures are elegant, they usually require a bitmap indicating which individual signatures are included.
If we want a BLS aggregate signature to fit within a single datagram, this bitmap constrains the system to roughly $n \approx 10{,}000$ nodes.
In our approach, by contrast, all messages have strictly constant size and fit comfortably inside a datagram regardless of $n$.

Currently no construction is known for constant-size quorum signatures with post-quantum security.
The dedicated synchronized multi-signature constructions that exist~\cite{squirrel,chipmunk,lemur} are logarithmic in the number of signers with large constant factors, making them only slightly more efficient than naive concatenation at $n \approx 1{,}000$ nodes.
Other, more general, attempts to achieve aggregation are based on zero-knowledge proofs over the verification of all the individual signatures, which are in the same order of magnitude regarding size, and with prover latency in the order of seconds.
See \Cref{tab:pq-agg-signatures} for a comparison of some of the most practically efficient aggregate signature proposals.

In summary, the situation for aggregate signatures appears bleak.
Accordingly, in this paper, we completely eliminate the need for signature aggregation by introducing a new communication pattern for a (consensus) protocol.

\section{Distributed Quorum Signature}
\label{sec:dqs}

We are refraining from using any cryptographic primitive other than simple signatures as in \Cref{def:vote}, and want to produce an interactive protocol providing functionality of quorum signatures of \Cref{def:certificate}.
In our protocol, we introduce two different notions of local events approximating propagation of quorum signatures between nodes. We call the events weak and strong certificate, or cert for short. \Cref{def:interactive-certificate-creation} states the properties we aim for.

\begin{definition}[distributed quorum signatures]
\label{def:interactive-certificate-creation}
    A distributed quorum signature scheme is an interactive protocol,
    which starts with the nodes issuing votes at potentially different times and ensures the following properties:
\begin{itemize}
    \item \textsc{(weak cert safety)} If any correct node creates a weak certificate, then enough nodes voted to meet the quorum.
    \item \textsc{(strong cert liveness)} If enough correct nodes vote to meet the certificate creation threshold, then all correct nodes will eventually create the strong certificate.
    \item \textsc{(consistency)} If any correct node creates a strong certificate, then all correct nodes will create the strong and the corresponding weak certificate.
\end{itemize}
    
\end{definition}

\subsection{Protocol}

In this section we describe our DQS protocol.
DQS uses regular digital signatures and Bracha-style reliable broadcast \cite{bracha1987asynchronous} as building blocks.

\begin{figure*}
\begin{tikzpicture}[
    >={Stealth[round]},
    box/.style={draw, align=center, inner sep=5pt, font=\small},
    font=\small
]
  \node[box] (send) {broadcast approval\\\textit{and}\\create weak certificate};
  \node[box, right=40mm of send] (pool) {create\\strong certificate};
  \draw[->] ($(send.west)+(-30mm,4mm)$) -- ($(send.west)+(0mm,4mm)$)
        node[midway, above, align=center]
        {sufficient votes};
  \draw[->] ($(send.west)+(-30mm,-4mm)$) -- ($(send.west)+(0mm,-4mm)$)
        node[midway, below, align=center]
        {$> f$ approvals};
  \node[font=\itshape] at ($(send.west)+(-15mm,0mm)$) {or};

  \draw[->] (send.east) -- (pool.west)
        node[midway, below, align=center]
        {$> 2f + c$ approvals};

\end{tikzpicture}
\caption{Certificate creation and approval broadcast.}
\label{fig:approval-flow}
\end{figure*}

\begin{definition}[vote broadcast]\label{def:vote-broadcast}
Votes of correct nodes are broadcast to all nodes.
\end{definition}

\begin{definition}[approval]
\label{def:approval}
    Approval is a message containing a certificate payload, signed by a single node.
\end{definition}

\begin{definition}[weak cert and approval broadcast]
\label{def:weak-cert-creation}
A weak certificate is created when one of the following conditions is met for the first time:
\begin{itemize}
    \item Votes matching the corresponding certificate creation quorum are observed.
    \item The corresponding approvals from nodes with cumulative weight of more than $f$ are observed.
\end{itemize}
Moreover, when a weak certificate is created the corresponding approval is broadcast.
\end{definition}

\begin{definition}[strong cert]\label{def:strong-cert-creation}
A strong certificate is created when a node observes $>2f+c$ corresponding approvals.
\end{definition}


\subsection{Properties}

\begin{lemma}[weak cert safety]
\label{lem:weak-cert-safety}
    Our protocol satisfies weak cert safety of \Cref{def:interactive-certificate-creation}.
    Moreover, if some correct node creates a weak certificate, there is some (potentially different) correct node that collected enough votes for creating the certificate.
\end{lemma}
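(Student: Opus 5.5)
The plan is to prove the stronger ``moreover'' statement; weak cert safety then follows at once. If a correct node collected enough votes to meet the certificate creation quorum, then those votes were actually cast by nodes, since votes are signed. So ``enough nodes voted to meet the quorum'' holds.

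For the stronger statement, I would argue by considering the \emph{first} correct node to create a given weak certificate. Order the weak-certificate creation events at correct nodes by real time. Ties are harmless: any of the simultaneous nodes can serve as ``first'' with respect to the approvals it has received. Let $v$ be such a first correct node, at time $t$. By \Cref{def:weak-cert-creation}, $v$ created the certificate for one of two reasons.
\begin{itemize}
\item \emph{Votes.} $v$ observed votes meeting the quorum. Then $v$ itself is the desired correct node.
\item \emph{Approvals.} $v$ observed approvals from nodes of cumulative weight $>f$.
\end{itemize}

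I will rule out the second case. Byzantine nodes hold total weight at most $f$, so at least one of these approvals was signed by a correct node $u$. Signatures are unforgeable, so the approval was really sent by $u$. A correct node broadcasts an approval only when it creates the weak certificate (\Cref{def:weak-cert-creation}). Hence $u$ created the weak certificate before $v$ received the approval, i.e., strictly before $t$. This contradicts the choice of $v$ as first. Therefore the first correct creator used the vote rule, and it is the required correct node. Every other correct creator of the same weak certificate then inherits the conclusion, because the witness node may differ from the creator.

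The main subtlety is making the ``first'' argument rigorous. An alternative is induction on the number of weak-certificate creation events at correct nodes. Either version relies on three facts: approvals are signed, so byzantine nodes cannot impersonate correct ones; correct nodes send approvals only after creating the certificate; and crashed nodes only stop sending, so they never produce false approvals. Crash weight $c$ therefore plays no role here, and the threshold $>f$ alone suffices.
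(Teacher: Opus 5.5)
Your proof is correct and is essentially the paper's argument: you take the first correct node to create the weak certificate, note that in the approval case the $>f$ weight of approvals must include one from a non-byzantine node that therefore created the certificate earlier, and conclude that the first creator used the vote rule. One precision point, which the paper's proof also glosses over: that approver is only guaranteed to be non-byzantine and may later crash, so ``first'' should be taken over all non-byzantine nodes (as your remark about crashed nodes following the protocol until they stop implicitly does), and the witness is then a node that was following the protocol at that time.
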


\begin{proof}
    Let $v$ be the correct node that created this weak cert first in the execution.
    By \Cref{def:weak-cert-creation}, one of the following conditions holds:
    \begin{enumerate}[a)]
    \item The node $v$ observed votes matching the certificate creation quorum itself. Then, trivially, enough nodes voted for the certificate to exist.
    \item The node $v$ created the weak certificate after receiving the corresponding approval from nodes with $>f$ of weight.
    Since byzantine nodes hold at most $f$ of weight, some correct node broadcast the corresponding approval before $v$.
    However, the approval is only broadcast upon observing the weak certificate, contradicting the choice of $v$.\qedhere
    \end{enumerate}
\end{proof}

\begin{lemma}[strong cert liveness]
\label{lem:strong-cert-liveness}
    DQS satisfies strong cert liveness of \Cref{def:interactive-certificate-creation}.
\end{lemma}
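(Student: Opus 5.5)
We need: if correct nodes whose votes alone meet the quorum all vote, then every correct node eventually creates the strong certificate. The plan is to push the honest votes forward step by step: votes reach all correct nodes, those nodes approve, and the approvals accumulate past the strong threshold. Throughout we use that messages between correct nodes are eventually delivered. Crashed nodes are the one point needing care.

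\textbf{Step 1: votes reach every live correct node.} By \Cref{def:vote-broadcast}, every correct voter broadcasts its vote to all nodes. Every non-crashed correct node therefore eventually observes all these votes. Since the quorum predicate is monotone (\Cref{def:certificate}), extra votes from other nodes cannot spoil it. So each such node meets the first condition of \Cref{def:weak-cert-creation}, unless it already created the weak certificate through approvals. Either way it creates the weak certificate and broadcasts its approval.

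\textbf{Step 2: crashes.} We must handle nodes that crash. The assumption is read as follows: the voting correct nodes meet the quorum counting only nodes that stay up. Nodes that crash may simply stop, so they need not contribute. This is the point I expect to need the most care. The phrase ``enough correct nodes vote'' must be interpreted so that crashed voters are excluded, or equivalently so that the quorum threshold has slack at least $c$. I would state this interpretation explicitly.

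\textbf{Step 3: approval weight and the strong certificate.} All non-crashed correct nodes broadcast approvals, and these have cumulative weight at least $1-f-c$. From $3f+2c<1$ we get $1-f-c > 2f+c$. Hence every non-crashed correct node eventually observes approvals of weight $>2f+c$ and, by \Cref{def:strong-cert-creation}, creates the strong certificate. Byzantine nodes can neither block votes nor block approvals sent between correct nodes, so this argument does not depend on their behavior.
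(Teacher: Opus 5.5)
Your proof is correct and follows the paper's argument. Correct votes reach all correct nodes, each correct node creates the weak certificate and broadcasts an approval, and approvals of weight $1-f-c>2f+c$ trigger the strong certificate. Your Step 2 is unnecessary, because the model already defines correct nodes as those that are neither byzantine nor crashed, so ``enough correct nodes vote'' excludes crashed voters with no extra interpretation.
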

\begin{proof}
    As per the property condition, suppose the votes of correct nodes meet a certificate condition. By \Cref{def:vote-broadcast}, all correct nodes will observe these votes. By \Cref{def:weak-cert-creation}, each correct node will cast a corresponding approval as a result. Since $1 > 3f + 2c$, each node will receive approvals of correct nodes with weight $1 - f - c > 2f + c$, fulfilling the condition of \Cref{def:strong-cert-creation}. Therefore each correct node will create the strong cert.
\end{proof}

\begin{lemma}[consistency]
\label{lem:consistency}
    DQS satisfies consistency of \Cref{def:interactive-certificate-creation}.
    Moreover, if a correct node creates a strong certificate at time $t$, all correct nodes create the same strong certificate by time $\max(t+2\Delta, \textit{GST}+2\Delta)$.
\end{lemma}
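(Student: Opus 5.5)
We follow the standard Bracha amplification argument. Suppose a correct node $v$ creates a strong certificate at time $t$. By \Cref{def:strong-cert-creation}, $v$ has observed approvals from nodes of cumulative weight $> 2f+c$. We subtract the byzantine weight $f$ and the crash weight $c$. Among these approvers, nodes of weight $> f$ are correct and never crash. In the model, crashed nodes are counted separately from the at least $1-f-c$ correct weight, so ``correct'' here means following the protocol forever. By \Cref{def:weak-cert-creation}, an approval is broadcast only upon weak-cert creation. Hence each of these correct approvers created the weak certificate and broadcast its approval to all nodes, and it did so by time $t$, since $v$ had already received the approvals at $t$.

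\textbf{First hop.} By partial synchrony, every approval sent by a correct node at time $\le t$ arrives at every correct node by $\max(t,\textit{GST})+\Delta$. Thus every correct node observes approvals of weight $>f$. By the second condition of \Cref{def:weak-cert-creation}, it creates the weak certificate (if it has not already) and broadcasts its approval by $\max(t,\textit{GST})+\Delta$.

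\textbf{Second hop.} Now all correct nodes have broadcast approvals by that time, and their total weight is at least $1-f-c$. Using the fault-tolerance assumption $3f+2c<1$, we get $1-f-c>2f+c$. By one more delivery bound, every correct node receives approvals of weight $>2f+c$ by $\max(t,\textit{GST})+2\Delta = \max(t+2\Delta,\textit{GST}+2\Delta)$. It therefore creates the strong certificate by \Cref{def:strong-cert-creation}. Together with the first hop, this gives consistency of \Cref{def:interactive-certificate-creation} (both the strong and the weak certificate) and the stated time bound.

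\textbf{Main obstacle.} The delicate step is the time accounting in the first hop. We must argue that the $>f$ correct approvals were \emph{sent} by time $t$ rather than merely received, which is immediate since receipt implies prior sending. We must also note that the delivery guarantee applies to messages sent before GST, yielding $\max(\textit{GST},\cdot)+\Delta$. The second hop uses the guarantee that approvals of all correct nodes, not only those seen by $v$, eventually circulate. This is exactly where the threshold $3f+2c<1$ enters.
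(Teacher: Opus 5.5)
Your proof is correct and follows the paper's two-hop argument essentially step for step. You subtract the faulty weight $f+c$ to get correct approvers of weight $>f$ who sent by time $t$, then use one $\Delta$-hop and the $>f$ rule so every correct node creates the weak certificate and broadcasts its approval, then a second $\Delta$-hop with $1-f-c>2f+c$ so every correct node creates the strong certificate by $\max(t+2\Delta,\textit{GST}+2\Delta)$; your extra care about messages sent before GST and about nodes that already held the weak certificate only makes the paper's accounting more explicit.
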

\begin{proof}
    Suppose a correct node $v$ created a strong certificate at time $t$.
    By \Cref{def:strong-cert-creation}, $v$ observed approval messages for the certificate from nodes with $> 2f + c$ of weight.
    Since nodes with weight of at most $f + c$ are faulty, correct nodes with weight $> 2f + c - (f+c) = f$ broadcast approvals for this certificate payload by time $t$.
    By definition of GST, these approvals will reach all correct nodes by time $\max(t+\Delta, \textit{GST}+\Delta)$.
    Since all correct nodes will receive approvals from correct nodes with weight $>f$, by \Cref{def:weak-cert-creation} all correct nodes will broadcast their approvals for this certificate payload by time $\max(t+\Delta, \textit{GST}+\Delta)$.
    Then, by time $\max(t+2\Delta, \textit{GST}+2\Delta)$, all correct nodes will receive approvals from all correct nodes with weight $1 - f - c > 2f+c$. By \Cref{def:strong-cert-creation}, all correct nodes will create the corresponding strong certificate by time $\max(t+2\Delta, \textit{GST}+2\Delta)$.
\end{proof}

\begin{theorem}
    Our protocol is a distributed quorum signature scheme of \Cref{def:interactive-certificate-creation}.
\end{theorem}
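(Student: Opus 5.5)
The theorem follows by assembling the three lemmas already proved, since \Cref{def:interactive-certificate-creation} asks for exactly three properties: weak cert safety, strong cert liveness, and consistency. I would check each one against the protocol given by \Cref{def:vote-broadcast}, \Cref{def:weak-cert-creation} and \Cref{def:strong-cert-creation}, and confirm that nothing beyond these definitions and the fault model $3f+2c<1$ was assumed.

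First, weak cert safety is exactly the first claim of \Cref{lem:weak-cert-safety}. Its proof takes the first correct node to create the weak certificate and rules out the approval-triggered case. That case would need $>f$ weight of approvals, hence at least one approval from a correct node, which that node would have had to send earlier. So the first creator must itself have observed a quorum of votes. Second, strong cert liveness is \Cref{lem:strong-cert-liveness}. It uses vote broadcast so that every correct node sees the correct votes, after which every correct node approves. It then applies $1-f-c>2f+c$, which follows from $3f+2c<1$.

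Third, consistency is \Cref{lem:consistency}. One point needs checking here: the definition also requires all correct nodes to create the \emph{weak} certificate, not only the strong one. In the proof of \Cref{lem:consistency}, every correct node receives $>f$ weight of approvals from correct nodes, so by \Cref{def:weak-cert-creation} it creates the weak certificate if it has not already. I would state this step explicitly.

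The only potential obstacle is matching the lemma statements to the definition precisely. Two things need care. Crashed nodes may stop sending, so the counts must rely only on correct nodes, whose weight is at least $1-f-c$. Liveness and consistency also hold asynchronously, because messages between correct nodes eventually arrive; GST is needed only for the timing bound. With these points checked, the theorem follows immediately.
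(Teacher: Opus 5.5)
Your proposal is correct and takes the same approach as the paper, whose proof simply cites \Cref{lem:weak-cert-safety,lem:strong-cert-liveness,lem:consistency}. You also make one step explicit that the paper leaves implicit: in the proof of \Cref{lem:consistency}, every correct node receives approvals of weight $>f$ from correct nodes, so by \Cref{def:weak-cert-creation} it creates the weak certificate, as the definition of consistency requires.
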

\begin{proof}
    The properties follow from \Cref{lem:weak-cert-safety,lem:strong-cert-liveness,lem:consistency}.
\end{proof}

\begin{lemma}[communication complexity]
\label{lem:communication-complexity}
Suppose the vote payloads are encodable in constant size.
Then, each message of DQS has size $\mathcal{O}(1)$, and it sends $\mathcal{O}(n^2)$ messages per weak certificate that is created.
\end{lemma}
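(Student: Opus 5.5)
The proof goes by listing the message types DQS uses and handling size and count separately. By \Cref{def:vote-broadcast,def:approval,def:weak-cert-creation,def:strong-cert-creation}, the protocol sends only two kinds of messages: votes, and approvals. A strong certificate is a purely local event and triggers no sending.

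\emph{Size.} A vote is a payload signed by one node (\Cref{def:vote}). It therefore consists of the payload, which is constant size by assumption, a signer identifier, and a single ordinary signature. For a fixed signature scheme and security parameter the signature has constant size. The identifier needs only $\mathcal{O}(\log n)$ bits; treating it as a machine word, as the paper does when it says messages fit in one datagram for any $n$, makes it constant. An approval (\Cref{def:approval}) is a certificate payload plus one signature. The key observation is that this payload does not have to list the underlying votes. It only has to name which certificate is meant, i.e., the constant-size vote payload or predicate identifier for which the quorum was reached. So approvals are also $\mathcal{O}(1)$. No message ever carries a set of signatures or a bitmap; this is exactly the point of \Cref{tab:comparison}.

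\emph{Count.} Fix a certificate payload. Each correct node broadcasts its vote once, costing $n-1$ messages, so votes total at most $n(n-1)$. For approvals, \Cref{def:weak-cert-creation} says the approval is broadcast only when the weak certificate is created \emph{for the first time}. Hence each correct node sends at most one approval broadcast per certificate, again $n-1$ messages each, and approvals total at most $n(n-1)$ as well. Summing gives $\mathcal{O}(n^2)$ messages per weak certificate. If several certificates share the same votes, the vote cost only gets amortized further. Messages sent by byzantine nodes are not counted, since the protocol cannot bound them.

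The main obstacle is being precise about what "per weak certificate that is created" charges. Votes are issued whether or not a certificate ever forms, so they must be attributed to the certificate they support; a single vote may feed only a bounded (constant) number of certificate predicates. I would state that assumption explicitly and then charge each vote broadcast to its certificate. The once-only approval rule then bounds everything else.
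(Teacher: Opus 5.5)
Your argument matches the paper's proof. Messages are single signatures over constant-size payloads. Per certificate, each correct node makes $\mathcal{O}(1)$ vote broadcasts and at most one approval broadcast, which gives $\mathcal{O}(n)$ messages per node and $\mathcal{O}(n^2)$ overall, with byzantine traffic excluded (the paper adds that receivers may drop byzantine messages beyond this maximum).

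One correction to your last paragraph: the assumption you propose to state (each vote feeds only boundedly many predicates) is not what your count uses, since shared votes only help, as you note yourself. What the count needs is the converse assumption, which the paper states: each weak certificate depends on votes for only a constant number of distinct messages. Your ``each correct node broadcasts its vote once'' is the special case where that constant is one.
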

\begin{proof}
    The size of vote and approval signatures is independent of the number of nodes.
    By \Cref{def:vote,def:approval}, all messages are then of size $\mathcal{O}(1)$.

    Each weak cert requires votes for some constant number of different messages.
    Each correct node broadcasts their votes and broadcasts zero or one approval.
    Therefore, each correct node sends at most $\mathcal{O}(n)$ messages, and all correct nodes together send at most $\mathcal{O}(n^2)$.
    Messages from byzantine nodes exceeding the maximum number can be dropped.
\end{proof}

\begin{corollary}
    If the quorums are chosen such that only a constant number of weak certificates are created, the total communication cost is $\mathcal{O}(n^2)$ messages.
\end{corollary}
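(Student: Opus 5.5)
The corollary follows by multiplying the per-certificate bound of \Cref{lem:communication-complexity} by the number of weak certificates. That lemma shows that each weak certificate created accounts for $\mathcal{O}(n^2)$ messages, each of size $\mathcal{O}(1)$. If only a constant number $k$ of distinct weak certificates can arise, the total is $k \cdot \mathcal{O}(n^2) = \mathcal{O}(n^2)$ messages. Since every message has constant size, this also gives $\mathcal{O}(n^2)$ total communication in bits, assuming constant-size vote payloads as in \Cref{lem:communication-complexity}.

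I would carry this out in three steps.

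\textbf{Votes and approvals.} I would attribute every message sent by a correct node to some certificate payload.
\begin{itemize}
\item Votes are broadcast once per vote (\Cref{def:vote-broadcast}). By hypothesis, the quorums involve a constant number of vote messages per certificate, so votes contribute $\mathcal{O}(n)$ messages per correct node per certificate.
\item By \Cref{def:weak-cert-creation}, an approval is broadcast only when a weak certificate is created, and this happens only the \emph{first} time the condition is met. Hence each correct node broadcasts at most one approval per certificate payload. Summed over the constantly many payloads, this is $\mathcal{O}(n)$ approval messages per correct node.
\end{itemize}

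\textbf{Summing over nodes.} There are at most $n$ correct nodes, so correct nodes send $\mathcal{O}(n^2)$ messages in total.

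\textbf{Byzantine traffic.} As in the proof of \Cref{lem:communication-complexity}, correct nodes drop any byzantine messages beyond the per-payload maximum: at most one approval and a bounded number of votes per sender and payload. Byzantine traffic therefore cannot force correct nodes to process more than $\mathcal{O}(n^2)$ messages.

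The main subtlety is the attribution step. Strong certificates trigger no additional broadcasts, and the approvals that lead to a strong certificate are exactly those already counted for the corresponding weak certificate. So counting weak certificates, that is, distinct payloads, suffices. I would state this explicitly and check it against \Cref{def:strong-cert-creation}, which only observes messages and sends none.
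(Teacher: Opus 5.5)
Your proposal is correct and takes the same route as the paper, which states the corollary without a separate proof as an immediate consequence of \Cref{lem:communication-complexity}: multiply its $\mathcal{O}(n^2)$-messages-per-weak-certificate bound by the constant number of weak certificates, and your three steps simply re-derive that lemma's counting. One small fix in your byzantine paragraph: a per-payload cap does not bound what a byzantine sender can impose, since it can invent arbitrarily many payloads, so the cap should be on the total number of messages accepted from each sender (a correct node sends each recipient only a constant number), which is harmless because neither strong cert liveness nor consistency relies on byzantine messages.
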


\paragraph{Attribution.}
By \Cref{lem:weak-cert-safety}, if some correct node creates a weak certificate, then there is some (potentially other) correct node $v$ that observed enough votes to meet the quorum.
Thus, this node $v$ holds a cryptographic proof to attribute a set of signers to the creation of the weak certificate.

\paragraph{Transferability.}
Because all votes and approvals are signed by their creators, a correct node $v$ can always transfer their certificate (weak or strong) by sending all the votes and approvals that led $v$ to create the certificate locally to another correct node.
However, such a proof has size $\mathcal{O}(n)$.
To prevent incurring this cost, the protocol is designed in a way that these do not appear in any good-case execution.
But this transferability can be used e.g. to recover certificate propagation after a network failure.

\section{Hash-Based Signatures}
\label{sec:custom_xmss}

\begin{table}
\begin{tabular}{@{}cccccc@{}}
    \toprule
    $w$ & $T$ & Sig. & PK & Sign [$\mu s$] & Verify [$\mu s$] \\
    \midrule
    $16$  & 360     & 1{,}168~B & 24~B & 6.7  & 6.0  \\
    $32$  & 604     & 952~B     & 24~B & 11.5 & 9.4  \\
    $64$  & 1{,}008 & 784~B     & 24~B & 18.8 & 15.0 \\
    $128$ & 1{,}778 & 688~B     & 24~B & 33.7 & 26.0 \\
    $256$ & 3{,}060 & 592~B     & 24~B & 55.7 & 44.3 \\
    \bottomrule
\end{tabular}
\captionof{table}{\label{tab:wots}Comparison of various parameterizations of WOTS-TS~\cite{wots_ts}.
$w$: Winternitz space-time tradeoff.
$T$: target-sum value.
Sig./PK: signature and public-key sizes.
Sign: signing speed.
Verify: verification speed.
Measurements are from an optimized implementation of WOTS-TS on an Apple M4 Max.}
\end{table}

In this section we argue why hash-based signatures fit our setting well.
We show that integrating hash-based signatures correctly can compensate for their main downsides, which are signature size and difficulty of key management.
In return, we get the strongest post-quantum security that fits in a single MTU-sized message and has low verification cost, all while having the most conservative security assumption.
For this, we assume that the nodes already exchanged enough public keys for one-time signatures ahead of time, ideally even mapping each of them to some in-protocol notion of time (\emph{slots} or \emph{epochs} are common terms) and message type.

One-time signatures (OTSs) are a fundamental building block of hash-based signature schemes.
An OTS allows securely signing a single message under a given key, whereas reusing a key would risk signature forgery.

We consider a state-of-the-art OTS scheme based on Winternitz signatures \cite{winternitz,wotsp}, namely WOTS-TS~\cite{wots_ts}.
In \Cref{tab:wots} we look at the major parameter trade-off: signature size $w$ against signing and verification time.
Another dimension for parameterization is the target-sum value $T$, trading off signing time against verification time.
For the security level we fix 192 bits, as this is the first level that is considered more secure than the 256-bit hash collision resistance, which is an assumption many protocols already make.
A security level of 192 bits corresponds to NIST category 3 security.


As an optimization, we note that some messages in distributed protocols do not even need a full signature over a message hash.
If the signature payload is shorter than a hash itself, e.g. only a message type and an index, this message bit-string can be more efficiently signed directly.
Because for WOTS-TS the signature size is linear in the message length, hash-then-sign only makes sense once the message is longer than the hash.

\section{Performance}

\begin{table}
\begin{center}
\begin{tabular}{@{}lccc@{}}
    \toprule
    Scheme        & $n$      & Message Size & Communication \\
    \midrule
    Aggregate BLS   & 1{,}000  & 205\,B          & 285\,KB         \\
    Falcon512+LaZer & 1{,}000  & $\approx$70\,KB & $\approx$70\,MB \\
    DQS             & 1{,}000  & 624\,B          & 1{,}248\,KB     \\
    \midrule
    Aggregate BLS   & 10{,}000 & 1{,}330\,B      & 14.1\,MB    \\
    Falcon512+LaZer & 10{,}000 & $\approx$70\,KB & $\approx$700\,MB \\
    DQS             & 10{,}000 & 624\,B          & 12.5\,MB    \\
    \bottomrule
\end{tabular}
\caption{\label{tab:communication-cost}Maximum message sizes and communication cost per node. Our DQS construction is instantiated with WOTS-TS, $w=256$ (cf. \Cref{tab:wots}). We compare DQS with pre-quantum BLS signatures (BLS12-381 instantiated for small signatures) with a bitmap and with the most compact post-quantum aggregation scheme Falcon512+LaZer.
These all assume good-case execution with vote and certificate payloads of size 32 bytes.
With $n = 10{,}000$ nodes, post-quantum DQS is arguably as efficient as pre-quantum BLS!}
\end{center}
\end{table}

Compared to the use of aggregate signatures, our protocol messages are all constant size and do not require indicating the set of signers in any message, which is usually done with a bitmap.
\Cref{tab:communication-cost} shows the communication cost of one instance of our scheme compared to one instance of BLS-based vote and certificate broadcast.
It can be seen how the message size for the (larger) certificate message in the BLS case grows as the bitmap grows linearly in the number of nodes.
As a result, at $n \approx 10{,}000$ or more our scheme becomes more efficient, in terms of communication cost.


\begin{figure}
\begin{tikzpicture}
\begin{axis}[
    ybar,
    bar width=2pt,
    ymin=0, ymax=250,
    xmin=0, xmax=101,
    ylabel={Latency [ms]},
    xlabel={Nodes reached [\% of weight]},
    xtick={0, 20, 40, 60, 80, 100},
    ytick={0, 50, 100, 150, 200, 250},
    enlarge x limits = 0.0,
    ylabel style={yshift=-5pt},
    xlabel style={yshift=5pt},
    title={Latency Simulation for Global Consensus Protocol},
    width=\linewidth,
    height=0.6\linewidth,
    tick label style={font=\small},
    legend pos=north west,
    legend style={font=\tiny},
    legend image code/.code={
        \draw [#1] (0cm,-0.075cm) rectangle (0.15cm,0.15cm);
    },
    legend cell align=left,
    ymajorgrids=true,
    grid style=dashed,
    axis on top=false,
]

\addplot+[ybar, bar shift=0pt] coordinates {
    (1, 131.20854200000005)
    (2, 131.24883799999998)
    (3, 131.26513799999998)
    (4, 131.371052)
    (5, 131.37519)
    (6, 131.38091800000004)
    (7, 131.38795)
    (8, 131.39191799999998)
    (9, 131.39317400000004)
    (10, 131.39558599999998)
    (11, 131.399184)
    (12, 131.400528)
    (13, 131.40102599999997)
    (14, 131.402146)
    (15, 131.403026)
    (16, 131.404036)
    (17, 131.40423999999996)
    (18, 131.404516)
    (19, 132.49075600000003)
    (20, 133.443666)
    (21, 133.74850200000003)
    (22, 133.93785200000005)
    (23, 134.045028)
    (24, 134.100336)
    (25, 134.11203200000003)
    (26, 134.13189000000006)
    (27, 134.55988599999998)
    (28, 135.22763400000002)
    (29, 136.24213399999996)
    (30, 136.77511599999997)
    (31, 136.97915199999997)
    (32, 137.15712599999998)
    (33, 137.43356600000004)
    (34, 137.535242)
    (35, 137.60283800000002)
    (36, 137.65934600000003)
    (37, 137.72952799999996)
    (38, 137.771262)
    (39, 137.77345599999998)
    (40, 137.80532000000002)
    (41, 137.82498399999997)
    (42, 137.85723)
    (43, 137.935738)
    (44, 137.94191999999998)
    (45, 137.94313399999996)
    (46, 137.95564999999993)
    (47, 137.95980799999998)
    (48, 137.971548)
    (49, 138.29627599999995)
    (50, 139.50750599999998)
    (51, 140.31370400000003)
    (52, 140.680578)
    (53, 141.12197200000003)
    (54, 141.738554)
    (55, 142.10868199999993)
    (56, 142.69867199999996)
    (57, 142.94571600000003)
    (58, 143.55174800000006)
    (59, 144.18192200000004)
    (60, 144.57995199999996)
    (61, 144.70632799999996)
    (62, 145.00264799999997)
    (63, 145.35592400000004)
    (64, 145.82624600000003)
    (65, 146.087836)
    (66, 146.27238800000003)
    (67, 146.40625400000005)
    (68, 146.59343)
    (69, 146.93125600000002)
    (70, 147.00348799999992)
    (71, 147.08771599999994)
    (72, 147.94349599999998)
    (73, 148.23982200000003)
    (74, 149.789998)
    (75, 150.475102)
    (76, 152.868288)
    (77, 153.61120000000003)
    (78, 156.311378)
    (79, 156.60505999999998)
    (80, 157.560638)
    (81, 159.44689600000004)
    (82, 167.88523199999997)
    (83, 170.20342999999997)
    (84, 189.697608)
    (85, 194.624482)
    (86, 195.44511999999995)
    (87, 196.05792399999999)
    (88, 196.3336359999999)
    (89, 196.40518599999993)
    (90, 196.50376799999995)
    (91, 196.63853199999997)
    (92, 196.7301679999999)
    (93, 196.782368)
    (94, 196.84017400000002)
    (95, 201.12878200000003)
    (96, 207.96448799999996)
    (97, 212.52403400000003)
    (98, 212.72423999999995)
    (99, 213.00579799999997)
    (100, 222.31105000000002)
};

\addplot+[ybar, bar shift=0pt] coordinates {
    (1, 123.67673000000002)
    (2, 123.77838200000001)
    (3, 123.84550200000001)
    (4, 123.93863800000001)
    (5, 124.06213)
    (6, 124.09127400000006)
    (7, 124.13278)
    (8, 124.17476599999996)
    (9, 124.18249800000001)
    (10, 124.19391)
    (11, 124.20722400000001)
    (12, 124.21321599999999)
    (13, 124.21556199999998)
    (14, 124.22031400000003)
    (15, 124.22369600000002)
    (16, 124.22791400000001)
    (17, 124.22863600000001)
    (18, 124.22982000000002)
    (19, 125.01597400000003)
    (20, 125.99328599999997)
    (21, 126.223698)
    (22, 126.39070800000002)
    (23, 126.407826)
    (24, 126.593878)
    (25, 126.63703)
    (26, 126.67148200000001)
    (27, 126.76479000000003)
    (28, 127.47927600000001)
    (29, 127.68943200000004)
    (30, 127.85124800000003)
    (31, 128.011952)
    (32, 128.19699000000003)
    (33, 128.346492)
    (34, 128.44056799999998)
    (35, 128.518168)
    (36, 128.578582)
    (37, 128.68828600000003)
    (38, 128.7619)
    (39, 128.819236)
    (40, 128.85257399999998)
    (41, 128.87656599999997)
    (42, 128.89400200000003)
    (43, 128.91075)
    (44, 128.92793599999996)
    (45, 128.98082399999998)
    (46, 129.03245000000004)
    (47, 129.07771199999996)
    (48, 129.11240600000002)
    (49, 129.210004)
    (50, 129.31219600000003)
    (51, 129.97409599999995)
    (52, 131.293238)
    (53, 131.45137400000004)
    (54, 132.59031399999998)
    (55, 133.322226)
    (56, 133.422602)
    (57, 134.41350799999998)
    (58, 134.804758)
    (59, 135.67699999999996)
    (60, 136.801828)
    (61, 137.061732)
    (62, 137.93425200000004)
    (63, 138.139182)
    (64, 138.485894)
    (65, 140.593732)
    (66, 140.97361800000002)
    (67, 141.01430600000003)
    (68, 141.20220600000002)
    (69, 142.04892)
    (70, 142.27766999999997)
    (71, 142.28754800000002)
    (72, 142.30508400000002)
    (73, 142.350496)
    (74, 142.38170599999998)
    (75, 143.15646999999998)
    (76, 147.018508)
    (77, 147.163822)
    (78, 152.61664599999997)
    (79, 152.679764)
    (80, 152.68359400000003)
    (81, 156.28239000000002)
    (82, 164.99562600000004)
    (83, 166.928164)
    (84, 187.36460399999999)
    (85, 190.76316799999998)
    (86, 191.42953800000004)
    (87, 191.65533)
    (88, 191.80599)
    (89, 191.88584600000002)
    (90, 191.89280000000002)
    (91, 191.90736599999997)
    (92, 191.93719199999995)
    (93, 191.950006)
    (94, 191.96830800000004)
    (95, 195.91847)
    (96, 198.96474600000002)
    (97, 202.9217500000001)
    (98, 203.140442)
    (99, 203.5228260000001)
    (100, 214.670086)
};

\legend{with DQS, with BLS}

\end{axis}
\end{tikzpicture}
\caption{\label{fig:latency-histogram}
Simulation results comparing the Alpenglow consensus protocol instantiated with our scheme vs BLS aggregation, on Solana's real-world geo-distributed network distribution of about 750 nodes (in April 2026).
The median node's finalization latency increases from 130~ms to 140~ms. 
}
\Description{Latency comparison of the proposed scheme and a BLS-based approach in a global consensus setting. The plot shows the observed consensus latency as a function of the cumulative voting weight of nodes reached. The x-axis represents the fraction of total node weight included in the aggregation process, while the y-axis reports the resulting latency in milliseconds. The blue bars correspond to our scheme and the red bars to the BLS baseline. Both approaches exhibit similar latency behavior, with latency remaining relatively stable for the majority of the weight range and increasing towards the tail as nodes with higher communication delays are included. The proposed scheme achieves comparable latency to the BLS-based approach throughout the experiment, with only marginal differences across the evaluated range.}
\end{figure}

Another concern is that our scheme might increase latency in some cases, because dissemination of a strong cert to all nodes is bounded by $2 \Delta$, as seen in \Cref{lem:consistency}, whereas broadcasting an aggregate signature takes at most $\Delta$.
To prove practicality of our scheme we equipped a state-of-the-art consensus protocol with DQS in place of BLS aggregate signature broadcast.  After careful integration, the protocol does not exhibit an additional network hop on the good-case latency path, because we were able to map all events necessary for finalizing a proposal to weak certificates, with strong certificates only needed for liveness and during leader rotation.

We simulated the Alpenglow consensus~\cite{alpenglow} with the current geographic node and stake distribution of the Solana blockchain.
We assume a minimum node out-bandwidth of 1~Gbit/s, maximum node out-bandwidth of 100~Gbit/s, proportional to weight after assigning the maximum bandwidth to the node with maximum weight.
The simulation considers real internet average link latencies, per-node congestion and transmission delays.

The results of the finalization time in this simulation can be seen in \Cref{fig:latency-histogram}.
Across the entire network nodes finalize about 10~ms later, with the median node going from 130~ms to 140~ms.
As mentioned, the mapping of events to weak and strong certificates avoided extra network hops for finalization, so this difference is entirely due to additional data transmissions.


\bibliographystyle{ACM-Reference-Format}
\bibliography{references}

\end{document}